\journal{ArXiv quant-ph}
\renewcommand{\half}{\frac{1}{2}}
\def\k2sum { \sum_{ k \in (-\pi/2,\pi/2) } }
\def\kpi2sum { \sum_{ k \in (-\pi/2,\pi/2) } }
\def\kpi2sigmasum { \sum_{\stackrel{k\in(-\pi/2,\pi/2) }{ \sigma = \pm } } }
\newtheorem{lemma}{Lemma}[section]
\newtheorem{theorem}[lemma]{Theorem}
\theoremstyle{definition}
\theoremstyle{remark}
\numberwithin{equation}{section}
\newenvironment{proof}{{\it Proof.}}{$\Box$}
\begin{document}

\begin{frontmatter}

\title{Absorption Probabilities for the Two-Barrier Quantum Walk}

\author[1]{Eric~Bach\corauthref{cor}}\ead{bach@cs.wisc.edu},
\author[2]{Lev Borisov}\ead{borisov@math.wisc.edu},

\address[1]{
  Computer Sciences Department, University of Wisconsin\\
  1210 W.~Dayton St., Madison, WI~~53706
}

\address[2]{
  Department of Mathematics, University of Wisconsin\\
  480 Lincoln Dr., Madison, WI~~53706
}

\corauth[cor]{Corresponding author. Phone 608-262-1204, Fax 608-262-9777.}

\date{\today}
\maketitle

\begin{abstract}
Let $p_j^{(n)}$ be the probability that a Hadamard quantum walk,
started at site $j$ on the integer lattice $\{0,\ldots,n\}$, is
absorbed at 0.  We give an explicit formula for $p_j^{(n)}$.
Our formula proves a conjecture of John Watrous, concerning an
empirically observed linear fractional recurrence relation for
the numbers $p_1^{(n)}$.
\end{abstract}

\begin{keyword}
Quantum walks \sep quantum random walks \sep
discrete quantum processes \sep quantum computation.

\end{keyword}
\end{frontmatter}

\section{Introduction}
\label{sec:introduction}

Consider a Hadamard quantum walk on the sites
$0,1,\ldots,n$, as defined in \cite{ABNVW}.   The boundary sites,
0 and $n$, are absorbing, so any walk is
certain to be absorbed.  Let $p_j^{(n)}$ denote the probability 
that the walk ends at location 0.  

The main result of this paper is an explicit formula for this
probability.  Let $A = 2 + \sqrt 2$, and $B = 2 - \sqrt 2$.
We will prove that when $n \ge 1$ and $1 \le j \le n$, we have
\begin{equation}
\label{eqn:explicit}
p_j^{(n)} = \frac{\sqrt 2}{4}
            \frac{ ( A^{n-j} - B^{n-j} ) ( A^{j-1} + B^{j-1} ) }
                 { A^{n-1} + B^{n-1} },
\end{equation}
In particular, when $j=1$,
\begin{equation}
\label{eqn:explicitj1}
p_1^{(n)} = \frac {\sqrt 2}{2}
            \frac{ A^{n-1} - B^{n-1} }
                 { A^{n-1} + B^{n-1} },
\end{equation}
and this satisfies the recurrence relation
\begin{equation}
\label{eqn:watrous}
p_1^{(n)} = \frac{1 + 2 p_1^{(n-1)}}{2 + 2 p_1^{(n-1)}}.
\end{equation}
This recurrence relation, conjectured by Watrous from numerical data,
appears in \cite{ABNVW}.  It is a consequence of 
(\ref{eqn:explicitj1}) that
$$
\lim_{n \to \infty} p_1^{(n)} =  \frac 1 {\sqrt 2},
$$
a result that was proved with some effort in \cite{ABNVW}.

The absorption probability values are interrelated in many interesting ways.
In the two-dimensional table of $p_j^{(n)}$, there is
a linear fractional recurrence relation, similar to (\ref{eqn:watrous}),
for each column and each diagonal.  This ``numerology'' was first
observed empirically, and then led us to conjecture (\ref{eqn:explicit}).
There is also a linear
recurrence relation common to all rows: if $1 \le j \le n-3$, we have
$$
p_j^{(n)} - 7 p_{j+1}^{(n)} + 7 p_{j+2}^{(n)} - p_{j+3}^{(n)} = 0.
$$
Two other relations, which we discuss later, have combinatorial 
interpretations.

It is interesting to consider the implications of (\ref{eqn:explicit}) 
for starting sites in the ``interior'' of the lattice.  
For $j= \alpha n$, $\alpha$
fixed and $n \rightarrow \infty$, $p_j^{(n)}$ will be close to the
limit $\frac {\sqrt 2} 4 = 0.35355...$\ .  Thus, from the interior region,
the probabilities for absorption at the left and right are almost
constant, approximately 35\% and 65\%.  On the other hand, for the classical
random walk, with equal probability of moving left and right, the
probability that the walk, starting from $j$, is absorbed at the left,
decreases linearly with $j$,
from 1 at the left barrier to 0 at the right barrier.
(One proof of this appears in \cite{DS}.)  This gives another example
of the idea that quantum walks ``spread out'' more evenly than 
classical walks do.

\section{Some Generating Functions.}

Our work will be based on the path count generating function
approach which was employed in \cite{ABNVW} and \cite{BCGJW}.

Recall that for our walk, each site $j$ has two states, 
corresponding to the walker facing left and facing right.
The one-step evolution matrix is a Hadamard transformation,
i.e. starting from state $|n,R\rangle$ the particle can go next to
\begin{eqnarray*}
        |n+1, R\rangle & \hbox{ with amplitude } 1/\sqrt 2  \\
        |n-1, L\rangle & \hbox{ w. a. } 1/\sqrt 2  \\
\end{eqnarray*}
and from $|n,L\rangle$ it can go to
\begin{eqnarray*}
        |n+1, R\rangle & \hbox{ w.a. } 1/\sqrt 2  \\
        |n-1, L\rangle & \hbox{ w.a. } - 1/\sqrt 2 
\end{eqnarray*}
The initial state of the particle is $|j,R \rangle$, with $0 < j < n$.

We define the {\sl path count generating function} to be
\begin{equation}
\label{eqn:PCGF}  
f_j^{(n)}(z) = \sum_{m \ge 1} \left( \sum_{P} \sigma(P) \right) z^m
\end{equation}
where the $m$-th sum is over paths $P$ of length $m$ that
are absorbed at the left (0) state,
and the sign, $\sigma(P)$, is $(-1)^{\hbox{ \# of $LL$ blocks }}$.  
In computing
the sign, overlaps count, for example, the sign of $LLL$ is +1.

The probability of absorption at 0 is given by the formula
\begin{equation}
\label{eqn:pjint}
p_j^{(n)} 
= {1 \over 2 \pi i} \int_{|z| = 1/\sqrt 2} |f_j^{(n)}(z)|^2  z^{-1} dz,
\end{equation}

In \cite{ABNVW} it was shown that
\begin{equation}
f_1^{(2)} = z
\end{equation}
and for $n > 1$,
\begin{equation}
\label{eqn:f1n}
f_1^{(n)} = z {1 - 2z f_1^{(n-1)} \over 1 - z f_1^{(n-1)} }.
\end{equation}
We now complement this with another recurrence relation that allows 
(\ref{eqn:PCGF}) to be computed for $j>1$.  
Observe that any path from $j$ that is absorbed
at 0 must go through 1.  Hence any such absorbed path breaks up into:
a) a path from $j$ to 1 reaching 1 only once; and b) a path
from 1 to 0.  Part a) is of the same shape as a path from $j-1$ to 0
on a lattice with absorption at $n-1$, and part b) is just a path from
1 to 0.  The path of part b), if it immediately moves left, must
be preceded by an $L$ move, so we must correct the sign for this case
(and this case only).  This gives
\begin{equation}
\label{eqn:fjn}
f_j^{(n)} = f_{j-1}^{(n-1)} \left( f_1^{(n)} - 2z \right),
\qquad \qquad 2 \le j < n.
\end{equation}

Here are few of these functions.
$$
f_1^{(3)} = {z (2z^2 - 1) \over (z-1)(z+1)},
\qquad
f_2^{(3)} = {z^2 \over (z-1)(z+1)};
$$
$$
f_1^{(4)} = {z (4z^4 - 3z^2 + 1) \over 2z^4 - 2z^2 + 1},
\qquad
f_2^{(4)} = {z^2 (2z^2 - 1) \over 2z^4 - 2z^2 + 1},
\qquad
f_3^{(4)} = {z^3 \over 2z^4 - 2z^2 + 1};
$$
$$
f_1^{(5)} = {z (2z^2 - 1) (4z^4 - 2z^2 + 1)
            \over (2z^3 + z^2 - z - 1)(2z^3 - z^2 - z + 1)},
f_2^{(5)} = {z^2 (4z^4 - 3z^2 + 1)
            \over (2z^3 + z^2 - z - 1)(2z^3 - z^2 - z + 1)},
$$
$$
f_3^{(5)} = {z^3 (2z^2 - 1)
            \over (2z^3 + z^2 - z - 1)(2z^3 - z^2 - z + 1)},
f_4^{(5)} = {z^4
            \over (2z^3 + z^2 - z - 1)(2z^3 - z^2 - z + 1)}.
\qquad
$$

We note that for each $n$, the $f_j^{(n)}$ have the same denominator.  This
can be proved as follows.  We first observe that the power series for
$f_j$ begins with $\pm z^j$, so that $f_j = z^j u$, with $u(0) \ne 0, \infty$.
In particular, we have $f_1 = z a(z)/b(z)$, where $a$ and $b$ are polynomials.
Next, we group paths according to the location
of the first $R$ move and find the relation
$$
f_j = z f_{j+1} + \sum_{k=2}^j (-)^k z^k f_{j+2-k}
                           + (-)^{j-1} z^j,
$$
valid for $1 \le j < n-1$.  From this it follows that $b f_{j+1}$ is
a polynomial, using induction on $j$.

It is interesting that the path sign is essentially the same
as the Rudin-Shapiro coefficient.  This coefficient $a_n$ is determined
by the parity of the number of ``11'' blocks in the binary 
notation of the positive
integer $n$.  The Rudin-Shapiro coefficient has many applications,
including the solution of extremal problems in classical Fourier 
analysis.  For a survey of this topic, see \cite{MF}.

\section{Some Combinatorial Results.}
\label{sect:combin}

It is interesting to see how much can be determined by purely
combinatorial arguments, without relying on integration.  
We begin with two interesting relations.

\begin{theorem}
\label{thm:firstlast}
We have
$$
p_1^{(n)} + p_{n-1}^{(n)} = 1.
$$
That is, in any row, the outer entries sum to 1.
\end{theorem}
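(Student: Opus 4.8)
The plan is to exploit the left--right reflection of the lattice, which should identify the probability of absorption at $0$ from site $n-1$ with the probability of absorption at $n$ from site $1$. To make this precise I would first introduce the companion generating function $g_j^{(n)}(z)$, defined exactly as in (\ref{eqn:PCGF}) but with the inner sum taken over paths absorbed at the right barrier $n$; the derivation that yields (\ref{eqn:pjint}) gives, verbatim, $q_j^{(n)} = \frac{1}{2\pi i}\int_{|z|=1/\sqrt2}|g_j^{(n)}(z)|^2 z^{-1}\,dz$ for the probability of absorption at $n$. Note that the path sign $\sigma(P)=(-1)^{\#LL}$ is intrinsic to the walk (the only negative step amplitude is $L\to L$), so $g$ uses the same sign rule as $f$. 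Since every walk is certainly absorbed at one of the two ends, $p_j^{(n)}+q_j^{(n)}=1$, and the theorem reduces to the single identity $p_{n-1}^{(n)}=q_1^{(n)}$.

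Next I would build a length-preserving bijection between the paths counted by $g_1^{(n)}$ and those counted by $f_{n-1}^{(n)}$. A path from $|1,R\rangle$ absorbed at $n$ must begin with the forced move to $|2,R\rangle$ (the alternative lands in $0$), while a path from $|n-1,R\rangle$ absorbed at $0$ must begin with the forced move to $|n-2,L\rangle$. After stripping these first steps I reflect the remainder through the center of the lattice, $m\mapsto n-m$, which interchanges rightward and leftward moves. This carries a path from $|2,R\rangle$ to $n$ exactly onto a path from $|n-2,L\rangle$ to $0$, so reflection is a bijection between the two families and plainly preserves the number of steps.

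The crux is the bookkeeping of signs. Reflection complements the direction sequence, exchanging $LL$ adjacencies with $RR$ adjacencies, while the two differing forced first steps ($R$ versus $L$) contribute a further mismatch. Writing a path as $P$ and its reflected partner as $Q$, I would carry out a parity count of the $RR,RL,LR,LL$ adjacencies---using that the stripped continuation starts and ends with prescribed directions---to show that $\sigma(Q)=(-1)^{|P|+1}\sigma(P)$, a sign depending only on the common length $|P|=|Q|$. Summing over the bijection then yields the clean functional relation
\begin{equation*}
f_{n-1}^{(n)}(z) = -\,g_1^{(n)}(-z).
\end{equation*}

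Finally, this relation shows that corresponding Taylor coefficients agree in absolute value, $|[z^m]f_{n-1}^{(n)}| = |[z^m]g_1^{(n)}|$ for every $m$ (where $[z^m]$ denotes the coefficient of $z^m$). Since each absorption probability is the term-by-term sum $\sum_m |[z^m]\,\cdot\,|^2\,2^{-m}$ of these coefficients---the Parseval form of the contour integral (\ref{eqn:pjint})---the two series coincide, giving $p_{n-1}^{(n)}=q_1^{(n)}=1-p_1^{(n)}$, which is the assertion; in this form no integral need actually be evaluated, keeping the argument combinatorial. I expect the sign computation of the third step to be the only real obstacle, the reflection bijection and the coefficient comparison being routine once the companion function $g$ is in place.
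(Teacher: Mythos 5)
Your proposal is correct and takes essentially the same approach as the paper: your reflection $m\mapsto n-m$ is precisely the paper's complementation bijection ($L\leftrightarrow R$) between paths from $1$ absorbed at $n$ and paths from $n-1$ absorbed at $0$, your sign relation $\sigma(Q)=(-1)^{|P|+1}\sigma(P)$ is the paper's observation that complementation changes all signs by a common factor (your exponent is in fact the accurate one --- the paper's own adjacency count is off by one, harmlessly, since only constancy of the factor matters), and your Parseval/sum-of-squares conclusion is exactly how the paper finishes. The companion function $g_1^{(n)}$ and the identity $f_{n-1}^{(n)}(z)=-g_1^{(n)}(-z)$ are just a generating-function packaging of the same argument.
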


\begin{proof}
Let $P$ be a path that starts
from 1 and is absorbed at $n$.  Its complement $\bar P$, obtained
by interchanging $L$ and $R$, is a path that starts at $n-1$
and is absorbed at 0.  We note that $P$ must begin and end
with $R$ moves. Let $P$ contain $\ell$ $L$ moves, $r$ $R$ moves,
and have $k$ occurrences of $RL$.  Then the sign of $P$
is $(-1)^{\ell - k}$.  However,
$$
k = \hbox{ \# of $RL$ in $\bar P$ }
  = \hbox{ \# of $LR$ in $\bar P$ },
$$
since $\bar P$ begins and ends with $L$ moves.  Therefore we have
$$
\sigma(P) = (-1)^{\ell - k}, \qquad \sigma(\bar P) = (-1)^{r - k},
$$
which implies
$$
\sigma(P) \sigma(\bar P) = (-1)^{\ell + r}
                         = (-1)^{r - \ell}
                         = (-1)^{n-1}.
$$
This tells us that if we complement all paths, the probability
is unchanged, since it is a sum of squares of quantities (signed
path counts) that individually change only by a sign.  So
$$
1 - p_1^{(n)} = \Pr[\hbox{ a walk from 1 is absorbed at $n$ }]
$$
$$
= \Pr[\hbox{ a walk from $n-1$ is absorbed at 0 }] = p_{n-1}^{(n)}.
$$
\end{proof}

If we let $q_j^{(n)} = 1 - p_j^{(n)}$ be the probability that the
walk reaches site $n$, then Theorem \ref{thm:firstlast}
looks like an ``obvious'' symmetry relation that should also hold
for $j>1$.  However, this is not so.  (See Table 1 at the end
of this paper.)

\begin{theorem}
\label{thm:firstsecond}
We have
$$
2 p_1^{(n)} = p_2^{(n)} + 1.
$$
In words, doubling the first number in any row is the same as
increasing the second by 1.
\end{theorem}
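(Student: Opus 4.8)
The plan is to express $p_2^{(n)}$ entirely through $f_1^{(n)}$, and the linchpin is the rational-function identity
\begin{equation*}
z\,f_2^{(n)} = f_1^{(n)} - z.
\end{equation*}
First I would establish this by clearing the denominator in (\ref{eqn:f1n}) to get $f_1^{(n)}\bigl(1 - z f_1^{(n-1)}\bigr) = z\bigl(1 - 2z f_1^{(n-1)}\bigr)$ and rearranging into $f_1^{(n)} - z = z f_1^{(n-1)}\bigl(f_1^{(n)} - 2z\bigr)$; by (\ref{eqn:fjn}) with $j = 2$ the right-hand side is precisely $z\,f_2^{(n)}$. The same identity admits a purely combinatorial reading: every left-absorbed path from site $1$ of length $\ge 2$ must open with an $R$ step into site $2$, and deleting that step is a sign-preserving bijection onto the left-absorbed paths from site $2$, while the single path of length $1$ (the bare $L$ move) accounts for the subtracted $z$.

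Next I would recast the integral (\ref{eqn:pjint}) as a Parseval sum. Writing $f_j^{(n)}(z) = \sum_{m \ge 1} c_m^{(j)} z^m$ with real (indeed integer) coefficients and integrating $|f_j^{(n)}|^2 z^{-1}$ term by term around $|z| = 1/\sqrt 2$, all off-diagonal products $z^m \bar z^{m'}$ with $m \ne m'$ integrate to zero, leaving
\begin{equation*}
p_j^{(n)} = \sum_{m \ge 1} \bigl(c_m^{(j)}\bigr)^2\, 2^{-m}.
\end{equation*}

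The two facts then combine cleanly. Read off coefficients, the identity says $c_k^{(2)} = c_{k+1}^{(1)}$ for every $k \ge 1$, and the lone shortest path gives $c_1^{(1)} = 1$. Hence
\begin{equation*}
p_2^{(n)} = \sum_{k \ge 1} \bigl(c_{k+1}^{(1)}\bigr)^2 2^{-k} = 2 \sum_{m \ge 2} \bigl(c_m^{(1)}\bigr)^2 2^{-m} = 2\Bigl(p_1^{(n)} - \frac{1}{2}\Bigr),
\end{equation*}
where the last equality strips the $m = 1$ term $\bigl(c_1^{(1)}\bigr)^2 2^{-1} = \frac{1}{2}$ from the Parseval sum for $p_1^{(n)}$. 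Rearranging yields $2 p_1^{(n)} = p_2^{(n)} + 1$.

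The hard part is justifying the Parseval step, i.e.\ legitimizing term-by-term integration on the circle $|z| = 1/\sqrt 2$. This rests on $f_j^{(n)}$ being analytic on the closed disk of radius $1/\sqrt 2$, equivalently on its power series converging there. That convergence is not automatic at the stated radius, but it does hold: because absorption is certain with an exponentially decaying time tail, $\sum_m \bigl(c_m^{(j)}\bigr)^2 2^{-m}$ converges and the poles of $f_j^{(n)}$ lie strictly outside $|z| = 1/\sqrt 2$ (for instance at $|z| = 1$ when $n = 3$ and at $|z| = 2^{-1/4}$ when $n = 4$). Once this is in hand, every remaining step is routine.
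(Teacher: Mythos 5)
Your proposal is correct and is essentially the paper's own proof: the key identity $f_1^{(n)} = z + z f_2^{(n)}$ is exactly what the paper observes, and your Parseval-sum computation with the shifted coefficients is precisely what the paper compresses into ``take the Hadamard square of both sides and evaluate at $1/2$; the two terms do not interfere because $f_2^{(n)}$ is a multiple of $z^2$.'' Your extra care about convergence on $|z|=1/\sqrt 2$ and the combinatorial reading of the identity are sound elaborations, not a different route.
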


\begin{proof}
Observe that
$$
f_1^{(n)}(z) = z + z f_2^{(n)}(z).
$$
Now take the Hadamard square of both sides, and evaluate at $1/2$.
The two terms in the right side do not interfere because $f_2^{(n)}(z)$
is a multiple of $z^2$.
\end{proof}

For $n = 2,3$, the absorption probabilities can be obtained by combinatorial
reasoning, without any integration.  First, for $n=2$, we have
$$
p_1^{(2)} = 1/2,
$$
since there are only two possible paths, each absorbed after one
step.  To compute $p_1^{(3)}$, observe that for $t = 1, 3, 5, \ldots$,
there is precisely one path that reaches site 0 after $t$ steps.  Therefore,
the signs are irrelevant, and we have
$$
p_1^{(3)} = \frac 1 2 + \frac 1 8 + \frac 1 {32} + \cdots = \frac 2 3 .
$$
Using Theorem \ref{thm:firstlast}, we find 
$$
p_2^{(3)} = \frac 1 3 .
$$

\section{Proof of the Explicit Formula.}

In this section we prove that (\ref{eqn:explicit}) holds.
What we would like to do is integrate (\ref{eqn:pjint}) by residues,
and expose the dependence of $p_j^{(n)}$ on $n$.  With the original
integral, this is probably impossible, since we do not know where
the poles inside the circle of integration actually are.  
However, we can express $p_j^{(n)}$ using the integral of
a new rational function, with the same mysterious poles inside, 
but with only one pole outside.
Since the sum of the residues of any rational function vanishes,
we can just as well evaluate the residue outside the circle,
and this leads to a formula for $p_j^{(n)}$.

Let us begin with another formula for the path count generating
function.  Let $\alpha, \beta$ be the two roots of
$$
T^2 - (1 - 2z^2)T - z^2 = 0.
$$
These will always be used symmetrically so we do not care which is which.
Explicitly,
$$
\alpha,\beta = \frac{1 - 2z^2 \pm \sqrt{1 + 4z^4} } 2 .
$$
Also there is a recurrence relation
\begin{equation}
\label{eqn:alphabetarec}
\alpha^k = (1 - 2z^2) \alpha^{k-1} + z^2 \alpha^{k-2}
\end{equation}
and similarly for $\beta$.


\begin{lemma}
If $1 \le j < n$, then
$$
f_j^{(n)} = z^j \frac 
             { \alpha^{n-j} -  \beta^{n-j} }
             { \alpha^n -  \beta^n + 2z^2 (\alpha^{n-1} -  \beta^{n-1}) },
$$
\end{lemma}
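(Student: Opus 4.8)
\textbf{The plan} is to prove the closed form by induction on $j$, using the two recurrences already available: the recurrence (\ref{eqn:f1n}) for $f_1^{(n)}$ to establish the base case, and the recurrence (\ref{eqn:fjn}) expressing $f_j^{(n)}$ in terms of $f_{j-1}^{(n-1)}$ and $f_1^{(n)}$ to carry out the inductive step. Throughout I would write $E_n = \alpha^n - \beta^n + 2z^2(\alpha^{n-1} - \beta^{n-1})$ for the common denominator, so the claim is $f_j^{(n)} = z^j(\alpha^{n-j}-\beta^{n-j})/E_n$.

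First I would verify the base case $j=1$: I must show $f_1^{(n)} = z(\alpha^{n-1}-\beta^{n-1})/E_n$. Since $f_1^{(2)} = z$ and $E_2 = \alpha^2 - \beta^2 + 2z^2(\alpha-\beta) = (\alpha-\beta)(\alpha+\beta+2z^2)$, and since $\alpha,\beta$ are roots of $T^2-(1-2z^2)T-z^2$ so that $\alpha+\beta = 1-2z^2$ and $\alpha\beta = -z^2$, I get $\alpha+\beta+2z^2 = 1$, hence $E_2 = \alpha-\beta$ and the formula gives $z(\alpha-\beta)/(\alpha-\beta) = z$, matching. For the inductive step on $n$, I would substitute the conjectured form of $f_1^{(n-1)}$ into the right-hand side of (\ref{eqn:f1n}) and simplify, showing it equals the conjectured $f_1^{(n)}$; the key algebraic fact driving this is the recurrence (\ref{eqn:alphabetarec}), which lets me rewrite $\alpha^n$ and $\beta^n$ in terms of lower powers and thereby relate $E_n$ to $E_{n-1}$. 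Specifically I expect the combination of numerator and denominator pieces in (\ref{eqn:f1n}) to telescope via (\ref{eqn:alphabetarec}) into the shifted form.

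Next, with $f_1^{(n)}$ in hand for all $n$, I would run an induction on $j$ using (\ref{eqn:fjn}). Assuming $f_{j-1}^{(n-1)} = z^{j-1}(\alpha^{(n-1)-(j-1)} - \beta^{(n-1)-(j-1)})/E_{n-1} = z^{j-1}(\alpha^{n-j}-\beta^{n-j})/E_{n-1}$, I must show that multiplying by $(f_1^{(n)} - 2z)$ produces $z^j(\alpha^{n-j}-\beta^{n-j})/E_n$. Cancelling the common factor $z^{j-1}(\alpha^{n-j}-\beta^{n-j})$, this reduces to the single identity
$$
f_1^{(n)} - 2z = \frac{z\, E_{n-1}}{E_n},
$$
which no longer involves $j$ at all. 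I would verify this by substituting the established formula for $f_1^{(n)}$, clearing denominators, and reducing the resulting polynomial identity in $\alpha,\beta,z$ to a consequence of $\alpha+\beta=1-2z^2$, $\alpha\beta=-z^2$, and (\ref{eqn:alphabetarec}).

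\textbf{The main obstacle} I anticipate is this last identity $f_1^{(n)} - 2z = zE_{n-1}/E_n$: establishing it requires expressing $E_n$ in terms of $E_{n-1}$ and the lower symmetric functions of $\alpha,\beta$, and the bookkeeping of the $2z^2$ correction terms inside $E_n$ is where sign and index errors are most likely to creep in. I would handle this by reducing everything to the two symmetric functions $\alpha^k-\beta^k$ (which themselves satisfy the same second-order recurrence as $\alpha^k$ and $\beta^k$ individually), so that all manipulations happen in the ring generated by these symmetric quantities; this keeps the computation symmetric in $\alpha$ and $\beta$ as the remark before the lemma emphasizes, and avoids having to track $\alpha$ and $\beta$ separately.
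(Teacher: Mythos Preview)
Your plan is correct and matches the paper's own proof exactly: the paper's argument is simply the two-line outline ``induct on $n$ for $j=1$ using (\ref{eqn:f1n}), then on $j$ using (\ref{eqn:fjn}),'' and you have filled in the details. The reduction of the second induction to a single $j$-independent identity relating $f_1^{(n)}-2z$ to $E_{n-1}/E_n$ is exactly the computation that outline calls for (watch the sign when you carry it out).
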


\begin{proof}
This can be proved by induction.  First, let $j=1$ and increase $n$,
using (\ref{eqn:f1n}).   Then, for each $n$ in turn, let
$j = 2,\ldots,n-1$, and use (\ref{eqn:fjn}).
\end{proof}

It will be convenient to allow let $j=n$, and define $f_n^{(n)} = 0$,
so that $p_n^{(n)} = 0$.  This is consistent with the above formula, 
as well as (\ref{eqn:fjn}).

As a function of $z$, $f_j{(n)}$ is odd or even according as $j$ is.
We now write $f_j^{(n)} = z^j g_j^{(n)}$, and $t = z^2$.  Then,
$$
g_j^{(n)}(t) = \frac {r_{n-j}} { r_n - 2t r_{n-1}},
$$
where
$$
r_k = \frac {\alpha^k - \beta^k} {\alpha - \beta}.
$$
Since $r_{k+2}-(\alpha+\beta)r_{k+1}+\alpha\beta r_k = 0$,
we have $r_0 = 0$, $r_1 = 1$, and
\begin{equation}
\label{rec}
r_{k+2}-(1-2t)r_{k+1}- t r_k = 0
\end{equation}
Consequently, the $r_k$ are polynomials in $t$.

On the circle $|z| = 1/ \sqrt 2$, we have $\bar z = 1/(2z)$, and
$\bar t = 1/(4t)$.  Making this substitution into $\bar g_j$,
clearing denominators, and observing that $\alpha \beta = -t$, we get
$$
| g_j^{(n)} | ^2
= (-)^j 2^j t^j 
  \frac{ r_{n-j}^2 }{ ( r_n + 2t r_{n-1} ) (r_n - r_{n-1}) }
$$
Since $f_j^{(n)} = z^j g_j^{(n)}$, we get from this
\begin{equation}
\label{eqn:integral}
p_j^{(n)} 
= \frac{(-1)^j}{2 \pi i} 
   \int_{|t|= 1/2}
   \frac{ t^{j-1} r_{n-j}^2 }{ ( r_n + 2t r_{n-1} ) (r_n - r_{n-1}) } dt.
\end{equation}


The next task is to study the poles of the integrand.

\begin{lemma}
The zeroes of $r_{n+1}-r_n$ are inside the circle $|t| = 1/2$, and
the zeroes of $r_{n+1}+2t r_n$ are outside it.
\end{lemma}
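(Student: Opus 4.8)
The plan is to exploit the involution $t \mapsto 1/(4t)$, which maps the circle $|t|=1/2$ to itself and interchanges its inside and outside, in order to show that the two polynomials are reflections of one another across that circle; this will reduce the two assertions to a single one.

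First I would record how the roots behave under this substitution. Writing $s=1/(4t)$ and using $\alpha,\beta = (1-2t\pm\sqrt{1+4t^2})/2$ together with $\alpha\beta=-t$, a short computation gives $\alpha(s)=1/(2\alpha(t))$ and $\beta(s)=1/(2\beta(t))$ for a suitable labeling of the square-root branch. Substituting into $r_k=(\alpha^k-\beta^k)/(\alpha-\beta)$ then yields the functional equation
\begin{equation*}
r_k\!\left(\tfrac{1}{4t}\right) = \frac{(-1)^{k+1}}{(2t)^{k-1}}\, r_k(t),
\end{equation*}
which I would verify directly from $r_1,r_2$ and the recurrence (\ref{rec}) as a check. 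Combining the cases $k=n$ and $k=n+1$ gives the clean identity
\begin{equation*}
(r_{n+1}-r_n)\!\left(\tfrac{1}{4t}\right) = \frac{(-1)^{n}}{(2t)^{n}}\,\bigl(r_{n+1}+2t\,r_n\bigr)(t).
\end{equation*}
Since $r_{n+1}+2t\,r_n$ has degree $n-1$ (its top terms cancel) and does not vanish at $t=0$, while $r_{n+1}-r_n$ has degree $n$ with a simple zero at $t=0$ (because $r_k(0)=1$ for $k\ge1$ and the linear term does not cancel), this identity sets up a bijection $\tau\mapsto 1/(4\tau)$ between the $n-1$ zeros of $r_{n+1}+2t\,r_n$ and the $n-1$ nonzero zeros of $r_{n+1}-r_n$. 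As $|\tau|\cdot|1/(4\tau)|=1/4$, the two members of each pair lie on opposite sides of $|t|=1/2$, provided neither lies on it. Thus it suffices to prove one half, say that every zero of $r_{n+1}+2t\,r_n$ satisfies $|t|>1/2$; the reflection then places all nonzero zeros of $r_{n+1}-r_n$ strictly inside, and the zero at the origin is inside as well.

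The main obstacle is this last step: pinning down the side and simultaneously excluding zeros on the circle itself. The most direct route I would try is analytic. The zeros of $r_{n+1}+2t\,r_n$ are exactly the poles, in the variable $t=z^2$, of the path count generating function, whose Taylor coefficients at $z=0$ are signed path counts; if one shows that these signed counts grow more slowly than $(\sqrt2)^{m}$, then the generating function is analytic in $|z|<1/\sqrt2$, forcing all such poles to satisfy $|t|\ge 1/2$, and a little extra care upgrades this to a strict inequality and rules out the circle. Should the requisite growth bound prove awkward, the fallback is a contour argument: on $|t|=1/2$ we have $\bar t = 1/(4t)$, so the reflection identity above relates $r_{n+1}+2t\,r_n$ to its own reflection there, and I would feed this symmetry into an argument-principle (Rouch\'e) count of the zeros inside the circle, fixing the orientation by checking the cases $n=1,2,3$ by hand. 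Either way, the structural reflection identity is the crux, and the only genuinely delicate point is the location of the zeros relative to the boundary circle.
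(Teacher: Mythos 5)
Your reflection identity is correct, and it is in substance the same device the paper itself uses: the relation $r_k(1/(4t)) = (-1)^{k+1}(2t)^{-(k-1)}r_k(t)$ (equivalently $\alpha(1/(4t)) = 1/(2\alpha(t))$, $\beta(1/(4t)) = 1/(2\beta(t))$, together with $\alpha\beta = -t$) is exactly the computation the paper performs when it substitutes $\bar t = 1/(4t)$ to pass from $g_j$ to $|g_j|^2$ and derive (\ref{eqn:integral}). Your degree bookkeeping is also right: $r_{n+1}-r_n$ has degree $n$ with a simple zero at $t=0$, while $r_{n+1}+2tr_n$ has degree $n-1$ and does not vanish at $0$, so the map $\tau \mapsto 1/(4\tau)$ really does pair the zeros of the two polynomials. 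Thus your reduction of the lemma to the one-sided claim --- every zero of $r_{n+1}+2tr_n$ satisfies $|t|>1/2$, with none on the circle --- is sound, and in fact more explicit than the paper, which leaves the reflection implicit in the derivation of (\ref{eqn:integral}).

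The gap is at that one-sided claim, and it is the crux rather than a loose end. From unitarity alone the signed path counts $a_m$ satisfy only $|a_m| \le (\sqrt 2)^m$ (each absorption amplitude $a_m 2^{-m/2}$ has modulus at most $1$), which yields analyticity of the generating function merely on the \emph{open} disk $|z| < 1/\sqrt 2$, i.e. poles with $|t| \ge 1/2$; this does not exclude poles on the circle, and rationality does not help (compare $1/(1-\sqrt 2\, z)$, whose coefficients have exactly this size and whose pole sits on the circle). The strict bound you invoke --- signed counts growing strictly slower than $(\sqrt 2)^m$ --- is essentially equivalent to what is being proved, and it requires a genuine input. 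The paper obtains it by citing Lemma 17 of \cite{ABNVW}, which gives analyticity of $f_1^{(n)}$ on the \emph{closed} disk, combined with the fact that all $f_j^{(n)}$ share that denominator (taking $j=n-1$, whose numerator in $t$ is constant, even removes any worry about numerator--denominator cancellation). If you want to avoid the citation, two honest ways to finish are: (i) show absorption on the finite lattice is certain and exponentially fast --- the evolution restricted to unabsorbed states is a finite submatrix of a unitary with spectral radius strictly less than $1$ --- whence $|a_m| = O(\mathrm{poly}(m)\,\rho^m 2^{m/2})$ with $\rho < 1$; or (ii) use square-summability, $\sum_m |a_m|^2 2^{-m} = p_j^{(n)} \le 1$, and observe that a rational function with a pole on or inside $|z|=1/\sqrt 2$ forces that series to diverge (expand into principal parts and average over $m$). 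Either closes your argument, but neither is ``a little extra care.'' Your fallback is weaker still: the reflection identity is symmetric between the two polynomials, so an argument-principle count based on it cannot by itself decide which family of roots lies inside, and Rouch\'e needs a dominating comparison function on $|t|=1/2$ that you have not produced; checking $n=1,2,3$ does not propagate to all $n$ unless you separately show no root ever crosses the circle, which is precisely the point at issue.
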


\begin{proof}
The rational function $f_1^{(n)}$ is analytic for $|z| \le 1/\sqrt 2$
(this is implicit in the proof of Lemma 17 of [1]), and this
holds for $f_j^{(n)}$ as well, since it has the same denominator 
as $f_1^{(n)}$.  The result is a consequence of this and the computations
used to derive (\ref{eqn:integral}).
\end{proof}

As a consequence of this lemma, we can choose an $\epsilon > 0$ so
that $p_j^{(n)}$ is given as in (\ref{eqn:integral}), but with
the contour of integration now $|t| = 1/2 -\epsilon$.

\begin{lemma}\label{div}
Let $n \ge 2$.  If $1 \le j \le n$,
$H_j=t^{j-1}(1+2t)r_{n-j} + (-1)^j(r_j-r_{j-1})(r_n+2tr_{n-1})$
is a polynomial in $t$, divisible by $r_n-r_{n-1}$.
\end{lemma}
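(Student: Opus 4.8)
The plan is to avoid locating the (unknown, possibly repeated) zeros of $r_n-r_{n-1}$ and instead to exhibit the quotient $H_j/(r_n-r_{n-1})$ as an explicit polynomial. Because every $r_k$ is a polynomial in $t$ (noted just after (\ref{rec})), $H_j$ is automatically a polynomial, so the entire content of the statement is the divisibility.

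I would first guess the quotient from small cases. Reducing $H_j$ modulo the recurrence (\ref{rec}) for $j=1,2,3,4$ yields quotients $-1,\,1,\,-(1-t),\,1-2t+2t^2$; recognizing $1,1,1-t,1-2t+2t^2$ as the solution of (\ref{rec}) with initial values $1,1$ identifies it as $r_j+2t\,r_{j-1}$. This suggests the exact factorization
$$H_j=(-1)^j\,(r_j+2t\,r_{j-1})\,(r_n-r_{n-1}),$$
which, once verified, proves the lemma with quotient $(-1)^j(r_j+2t\,r_{j-1})$.

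The key step is a single d'Ocagne-type identity for the sequence $r_k$,
$$r_{j-1}r_n-r_j r_{n-1}=(-1)^j\,t^{\,j-1}\,r_{n-j},\qquad 1\le j\le n.$$
I would prove it from the closed form $r_k=(\alpha^k-\beta^k)/(\alpha-\beta)$: expanding the left side over $(\alpha-\beta)^2$, the $\alpha^{n+j-1}$ and $\beta^{n+j-1}$ terms cancel and the remaining four terms collapse to $(\alpha-\beta)(\alpha\beta)^{j-1}(\beta^{n-j}-\alpha^{n-j})$; dividing by $(\alpha-\beta)^2$ and using $\alpha\beta=-t$ gives the identity. (Induction on $j$ via (\ref{rec}) also works, but the $\alpha,\beta$ computation is shorter and handles the degenerate locus $\alpha=\beta$ for free, both sides being polynomials.)

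The remainder is bookkeeping: substitute $t^{j-1}r_{n-j}=(-1)^j(r_{j-1}r_n-r_j r_{n-1})$ into the first term of $H_j$, factor out $(-1)^j$, and collect the products $r_jr_n,\,r_jr_{n-1},\,r_{j-1}r_n,\,r_{j-1}r_{n-1}$. The coefficient of $r_n$ works out to $r_j+2t\,r_{j-1}$ and that of $r_{n-1}$ to its negative, so everything factors as $(-1)^j(r_j+2t\,r_{j-1})(r_n-r_{n-1})$, giving polynomiality and divisibility simultaneously. I expect the main obstacle to be discovery rather than difficulty: guessing the quotient and isolating the clean identity above. Once that identity is written down the verification is routine, and it sidesteps the questions about the location and multiplicity of the zeros of $r_n-r_{n-1}$ that a root-by-root or residue argument would otherwise force.
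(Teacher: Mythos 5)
Your proof is correct, and it takes a genuinely different route from the paper's. The paper argues by induction on $j$: it asserts that the $H_j$ satisfy a two-term linear recurrence with polynomial coefficients (a consequence of (\ref{rec})), so that divisibility by $r_n-r_{n-1}$ propagates from the base cases $H_1=r_{n-1}-r_n$ and $H_2=r_n-r_{n-1}$, which match your quotients $-1$ and $+1$. You instead exhibit the quotient in closed form via the d'Ocagne-type identity $r_{j-1}r_n-r_jr_{n-1}=(-1)^j t^{j-1}r_{n-j}$, arriving at the factorization $H_j=(-1)^j(r_j+2t\,r_{j-1})(r_n-r_{n-1})$; I verified both the identity (the expansion over $(\alpha-\beta)^2$ collapses exactly as you describe, using $\alpha\beta=-t$) and the final bookkeeping, which indeed gives coefficient $r_j+2t\,r_{j-1}$ on $r_n$ and its negative on $r_{n-1}$. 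The paper's induction is shorter, but your factorization buys more: it makes the paper's follow-up remark --- that $H_j/(r_n-r_{n-1})$ depends only on $j$ --- explicit and immediate, and it identifies the polynomial $R_j=r_{n-j}(r_j+2t\,r_{j-1})$ that the paper leaves unspecified in the proof of Theorem \ref{thm:main}. Your route is also self-checking in a way the paper's is not: the recurrence as printed in the paper, $H_{j+2}=(1-2t)H_{j+1}+H_j$, is in fact misstated (for $n=3$ one has $H_1=t-4t^2$, $H_2=-t+4t^2$, $H_3=t-5t^2+4t^3$, which violate it); the correct relation is $H_{j+2}=-(1-2t)H_{j+1}+t\,H_j$, which follows at once from your factorization because $r_j+2t\,r_{j-1}$ satisfies (\ref{rec}) and the sign $(-1)^j$ alternates. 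This slip does not invalidate the paper's strategy --- the corrected recurrence still has polynomial coefficients, so the induction from $j=1,2$ goes through --- but it underlines the value of having the explicit quotient in hand.
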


\begin{proof}
The recurrence relation on $r_j$ implies that
$H_{j+2}=(1-2t)H_{j+1}+H_j$, so it suffices to check
the statement for $j=1$ and $j=2$. It is easy to check that
$H_1=r_{n-1}-r_n$, and the recurrence
relation on $r_n$ implies that $H_2=r_n-r_{n-1}$.
\end{proof}

The above proof has the consequence that $H_j/(r_n-r_{n-1})$ 
depends on $j$ only.

\begin{theorem}
\label{thm:main}
Let $n \ge 2$ and $1 \le j \le n$.  Then (\ref{eqn:explicit}) holds.
\end{theorem}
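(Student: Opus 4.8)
The plan is to evaluate the contour integral (\ref{eqn:integral}) by residues, after replacing its integrand by a rational function whose only singularity outside the circle is a single simple pole at $t=-1/2$. Write $D=r_n-r_{n-1}$ and $E=r_n+2tr_{n-1}$, so that the integrand is $I=(-1)^j t^{j-1}r_{n-j}^2/(ED)$; by the preceding lemma on pole locations the zeros of $D$ lie inside $|t|=1/2$ and those of $E$ lie outside. The difficulty the text warns about is that neither zero set is known explicitly, so a direct residue calculation at the inside poles is hopeless.

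The crux is Lemma \ref{div}. Writing $H_j=c_j D$ with $c_j=H_j/D$ a polynomial (this is exactly the content of the lemma), the defining expression for $H_j$ rearranges to the identity $t^{j-1}(1+2t)r_{n-j}=c_j D-(-1)^j(r_j-r_{j-1})E$. Substituting this for one of the two factors of $r_{n-j}$ in the numerator of $I$ and cancelling $E$ and $D$ splits the integrand as $I=P-Q$, where
$$P=\frac{(-1)^j c_j\, r_{n-j}}{(1+2t)E},\qquad Q=\frac{(r_j-r_{j-1})\,r_{n-j}}{(1+2t)\,D}.$$
Because $c_j$ and $r_{n-j}$ are polynomials, $E$ has no zeros inside the circle, and $1+2t$ vanishes only at $t=-1/2$ (on the circle, hence outside the slightly shrunken contour $|t|=1/2-\epsilon$), the function $P$ is analytic on the closed disk bounded by the contour, so $\oint P\,dt=0$. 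Thus $p_j^{(n)}=-\frac{1}{2\pi i}\oint Q\,dt$, and $Q$ has the same mysterious inside poles (the zeros of $D$) as $I$, but only the single pole $t=-1/2$ outside.

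Next I would apply the residue theorem globally. A degree count gives $\deg(\text{numerator of }Q)=n-2$ while $\deg(\text{denominator})=n$, so $Q$ decays like $t^{-2}$ and has no residue at infinity; its only poles are the zeros of $D$ (inside) and $t=-1/2$ (outside). Hence the sum of the inside residues equals $-\operatorname{Res}_{t=-1/2}Q$, and therefore $p_j^{(n)}=\operatorname{Res}_{t=-1/2}Q=\tfrac12\,(r_j-r_{j-1})r_{n-j}/(r_n-r_{n-1})\big|_{t=-1/2}$, the factor $\tfrac12$ coming from $\frac{d}{dt}(1+2t)=2$.

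It then remains to evaluate at $t=-1/2$. There $\alpha,\beta=1\pm 1/\sqrt2$, so $\alpha=A/2$ and $\beta=B/2$, giving $r_k(-1/2)=(A^k-B^k)/(2^k\sqrt2)$. Using $A-2=\sqrt2=-(B-2)$ one finds $r_j(-1/2)-r_{j-1}(-1/2)=(A^{j-1}+B^{j-1})/2^j$; substituting the three evaluations into the residue makes all powers of $2$ cancel and reproduces (\ref{eqn:explicit}), with the boundary cases $j=n$ (where $r_0=0$, so $Q\equiv 0$) and $j=1$ included automatically. The main obstacle is the algebraic split $I=P-Q$: the whole argument turns on recognizing that Lemma \ref{div} is precisely what removes the unknown outside poles (the zeros of $E$) and leaves the single tractable pole at $t=-1/2$. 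Once that decomposition is in hand, the residue theorem and the evaluation $\alpha=A/2,\ \beta=B/2$ are routine.
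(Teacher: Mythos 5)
Your proposal is correct and follows essentially the same route as the paper: the same use of Lemma \ref{div} to split the integrand into a piece with all poles outside the contour (integral zero) and the piece $\frac{(r_j-r_{j-1})r_{n-j}}{(1+2t)(r_n-r_{n-1})}$, then the same degree count ruling out a pole at infinity, the sum-of-residues argument isolating the simple pole at $t=-\frac12$, and the same evaluation $\alpha=A/2$, $\beta=B/2$. Your evaluation $(r_j-r_{j-1})(-\frac12)=(A^{j-1}+B^{j-1})/2^j$ is in fact the correct version of the paper's formula (which has a sign typo there), so nothing is missing.
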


\begin{proof}
Using Lemma \ref{div} to rewrite $t^{j-1} r_{n-j}$,
we have for $1 \leq j \leq n$
$$
p_j^{(n)} = -\frac 1{2\pi i}
\int_{\vert t\vert=\frac 12-\epsilon}
\frac {r_{n-j}(r_j-r_{j-1}) dt}{(1+2t)(r_n-r_{n-1}) }
+ \frac 1 {2 \pi i} \int_{\vert t\vert=\frac 12-\epsilon}
\frac {R_j(t) dt}{(r_n+2tr_{n-1})(1+2t)}, 
$$
for some polynomial $R_j$.
The second integral is zero, because all the poles
of the rational function are outside the integration contour.
In the first integral, the integrand
$$
\frac {r_{n-j}(r_j-r_{j-1})}{(1+2t)(r_n-r_{n-1})},
$$
as a function on the Riemann sphere,
has a unique singularity outside the contour, which is a 
pole of order $1$ at $t = -\frac 12$.
Indeed, the degree of $r_k$ is $k-1$ so the degree of the denominator
is two plus the degree of the numerator, which assures that there is
no pole at infinity.

Since the residues of a rational function sum to zero, we get
\begin{equation}
\label{eqn:pjn}
p_{j}^{(n)}={\rm Res}_{t=-\frac 12} 
\frac {r_{n-j}(r_j-r_{j-1})}{(1+2t)(r_n-r_{n-1})} 
= \frac 1 2
\frac {r_{n-j}(r_j-r_{j-1})}{(r_n-r_{n-1})} \left( - \frac 12 \right) .
\end{equation}

To derive the explicit formula, observe that
$\alpha(-1/2) = \frac{2 + \sqrt 2} 2$ and
$\beta(-1/2) = \frac{2 - \sqrt 2} 2$, so
(with $A,B = 2 \pm \sqrt 2$)
$$
r_k (- \half) = \frac{A^k - B^k} {2 ^ {k + 1/2}},
\qquad
(r_k - r_{k-1}) (-\half) = \frac{A^{k-1} - B^{k-1}} {2 ^ {k}}.
$$
Then, substitute these values into (\ref{eqn:pjn}) and simplify.
\end{proof}

It is natural to extend our notation so that $p_0^{(n)} = 1$
for $n \ge 2$.  The explicit formula does not work there, but the
above proof indicates a reason for this.  We could use (\ref{rec})
to extend $r_j$ to $j = -1$, but then $r_{-1}$ would not be a polynomial,
invalidating our arguments.

Using the recurrence relation for $r_k$, we can prove that
when $n \ge 1$, the polynomial $r_n - r_{n-1}$ has distinct roots.
However, the computations for this are not very enlightening, so 
we leave verification of this to the reader.

\section{Remarks.}
\label{sect:computing}

Initially, we had arrived at the formula (\ref{eqn:explicit}) after numerical
calculations that were done by a different method, which we believe
to be of independent interest.  In this section, we discuss how
these calculations were done. 
The idea is to combine numerical approximation of the residues with
a bound on the denomininator.

Suppose we want a value for
\begin{equation}
\label{int:generic}
p := 
\frac 1 {2 \pi i} \int_{|z| = a} |f(z)|^2 \frac {dz} z,
\end{equation}
in which $f \in {\bf Q}(z)$, and $a \in \bf Q$.  Let $a$ be large
enough that all the poles of $f$ are inside the circle
$|z| = a$.  On this circle, $\bar z = a^2 / z$, and if
we use this to get an expression for $\bar f$, we can bring
(\ref{int:generic}) into the form
$$
p = 
\frac{m_1}{m_2} \cdot 
{1 \over 2 \pi i} \int_{|z| = a} \frac {b(z) dz}{c(z)d(z)},
$$
in which $m_1,m_2 \in \bf Z$, $b,c,d \in {\bf Z}[v]$, and $c,d$ are
monic.  The zeroes of $c$ and $d$ are algebraic integers, and we choose
notation so that those of $c$ and $d$ are outside and inside the circle,
respectively.

In the cases of interest to us, $c$ and $d$ had distinct roots,
so let us make this simplifying assumption.  Then, evaluating 
$(2 \pi i)^{-1} \int b (cd)^{-1} dz$ by residues produces
\begin{equation}
\label{eqn:residues}
\sum_\xi \frac{b(\xi)} {\prod_\eta(\xi-\eta)}
            \cdot \frac 1 { \prod_{\xi' \ne \xi} (\xi - \xi') }
\end{equation}
In this expression, $\xi$ and $\xi'$ range over zeroes of $d$,
and $\eta$ ranges over zeroes of $c$.  The denominators are algebraic
integers, and
$$
\prod_\eta(\xi-\eta) \ \mid\ \hbox{resultant}(c,d) := R;
$$
$$
\prod_{\xi'\ne\xi}(\xi-\xi') \ \mid\ \hbox{discriminant}(d) := D.
$$
Since $b$ has integral coefficients, we conclude that 
(\ref{int:generic}) is an integral multiple of $\Delta^{-1}$, where
$$
\Delta = m_2 R D .
$$
In particular (by Galois theory), $p$ is a rational number.

To obtain (\ref{int:generic}) exactly, then, it will suffice to compute
the integer $\delta$, and then evaluate (\ref{eqn:residues}) using
numerical approximations to the zeroes, with enough accuracy to 
determine (\ref{int:generic}) to the nearest integer.


Using this method, we were able to compute absorption probabilities
exactly up to $n=20$ in a couple of minutes on a workstation.  Straight
numerical integration would have been much slower, 
and would not have given us exact results.

We end this paper with a short table of the $p_j^{(n)}$.
The numerators of $p_1$ are Sequence A084068 in \cite{Sloane}.

\begin{center}
TABLE 1. Absorption Probabilities $p_j^{(n)}$.
\end{center}
$$
\begin{array}{rcccccccc}
   &  j=1   &    2  &     3   &     4  &  5     & 6      & 7      &   8    \\
n=2&  1/2   &       &         &        &        &        &        &        \\
3  &  2/3   &  1/3  &         &        &        &        &        &        \\
4  & 7/10   &  4/10 &    3/10 &        &        &        &        &        \\
5  & 12/17  &  7/17 &    6/17 &    5/17&        &        &        &        \\
6  & 41/58  &  24/58&    21/58&   20/58& 17/58  &        &        &        \\
7  & 70/99  &  41/99&    36/99&   35/99& 34/99  & 29/99  &        &        \\
8  & 239/338&140/338& 123/338 & 120/338& 119/338& 116/338& 99/338 &        \\
9  & 408/577&239/577& 210/577 & 205/577& 204/577& 203/577& 198/577& 169/577
\end{array}
$$

\section*{Acknowledgments}

The research of Eric Bach was supported by the National Science
Foundation (grants CCF-0635355 and CCF-0523680), and the Wisconsin
Alumni Research Foundation (through a Vilas Associate Award).
The hospitality of the
University of Waterloo is also gratefully acknowledged.

The research of Lev Borisov was supported by the National
Science Foundation (grant DMS-0758480).

\bibliographystyle{elsart-num}


\begin{thebibliography}{10}
\expandafter\ifx\csname url\endcsname\relax
  \def\url#1{\texttt{#1}}\fi
\expandafter\ifx\csname urlprefix\endcsname\relax\def\urlprefix{URL }\fi

\bibitem{ABNVW}
A.~Ambainis, E.~Bach, A.~Nayak, A.~Vishwanath, J.~Watrous, One-dimensional
  quantum walks, in: Proceedings of the Thirty-Third Annual ACM Symposium on
  Theory of Computing, 2001, pp. 60--69.

\bibitem{BCGJW}
E.~Bach, S.~Coppersmith, M.~Paz Goldschen, R.~Joynt, and J.~Watrous,
One-dimensional quantum walks with absorbing boundaries, 
J. Comp. Sys. Sci., 69, 2004, 562-592.

\bibitem{DS} 
P. G. Doyle and J. L. Snell, Random Walks and Electrical Networks,
MAA, 1984.

\bibitem{MF} 
M. Mend\`es France, The Rudin-Shapiro sequence, Ising chain, and
paperfolding, in B. C. Berndt et al., eds., Analytic Number
Theory: Proceedings of a Conference in Honor of Paul Bateman,
Birkh\"auser, 1990, pp. 367-382.

\bibitem{Sloane} 
N. J. A. Sloane, On-Line Encyclopedia of Integer Sequences.

\end{thebibliography}

\end{document}